\newtheorem{theorem}{Theorem}
\newtheorem{proposition}[theorem]{Proposition}
\newtheorem{lemma}{Lemma}
\newtheorem{definition}{Definition}
\newcommand{\comb}{\otimes}
\newcommand{\unobservable}{\cunit}
\newcommand{\re}{r@b.uk}
\newcommand{\ra}{starr.uk}
\title{A Veri{f}{i}ed Algebra for Linked Data}
\author{
 Ross Horne
 and
 Vladimiro Sassone
\institute{
 Electronics and Computer Science,
 University of Southampton, United Kingdom
 \email{\{rjh06r,vs\}@ecs.soton.ac.uk}
}
}
\begin{document}

\maketitle

\begin{abstract}
A foundation is investigated for the application of loosely structured data on the Web. This area is often referred to as Linked Data, due to the use of URIs in data to establish links. This work focuses on emerging W3C standards which specify query languages for Linked Data. The approach is to provide an abstract syntax to capture Linked Data structures and queries, which are then internalised in a process calculus. An operational semantics for the calculus speci{f}{i}es how queries, data and processes interact. A labelled transition system is shown to be sound with respect to the operational semantics. Bisimulation over the labelled transition system is used to verify an algebra over queries. The derived algebra is a contribution to the application domain. For instance, the algebra may be used to rewrite a query to optimise its distribution across a cluster of servers. The framework used to provide the operational semantics is powerful enough to model related calculi for the Web.
\end{abstract}

\section{Introduction}

The application of interest is a powerful emerging idea commonly referred to as the Web of Data~\cite{Bizer2009b}. The Web of Data marks a shift from publishing documents to publishing data. The Web is based on documents which contain links to other documents. The Web of Data is concerned with resources more general than documents. Data on the Web contains links to resources described in multiple data sources. In both the case of the Web and the Web of Data the links between documents and resources, respectively, are established by a standardised global naming system --- the URI. On the Web, URIs allow documents in distributed locations with distinct ownership to refer to each other. Similarly, in a Web of Data, URIs allow data in distributed locations with distinct ownership to refer to common resources.

Suppose that the URIs are not used as a standard naming system. In this case, each data source uses its own naming system. Typically, in this case each data source is disjoint, hence traditional database techniques may be applied. This is referred to as closed world system, since the boundaries of the data source are known. For instance, classical negation can be used to determine whether some data does not appear in a data source, and schemata can constrain the structure of data. 

In contrast, the presence of URIs as a global naming system, enables an open world system. In an open world system a variety of protocols can be used to obtain data from multiple sources based on the URIs which appear. For instance, a request may be sent to a URI to directly obtain some data about that URI. Alternatively, services may be used to {f}{i}nd data relevant to a URI. In this open world setting, there is no guarantee that mechanisms {f}{i}nd all relevant data. There may always be data not known locally which refers to a resource; hence in general optimal query results cannot be obtained and classical negation cannot be applied. Another restriction in an open world system is that schemata which constrain data cannot be enforced globally.

A light semi-structured data format must be agreed for the Web of Data. The W3C recommends the Resource Description Framework (RDF) as a general format for presenting data~\cite{Carroll2004}. RDF is based on triples which consist of a subject, predicate and object. The subject, predicate and object are all named by URIs. Each URI in a triple may represent resources in different locations, hence a triple links locations. Other semi-structured data formats contain URIs, such as feeds. RDF is intended as a minimal data format to which other formats can be lifted.

Assuming that Linked Data can be gathered, observations about Linked Data can be made. The W3C recommendation is to use SPARQL Queries to make such observations~\cite{Seaborne2008}. In this work, to model this scenario, both RDF Data and SPARQL Queries are internalised in a process calculus. The operational semantics of the process calculus speci{f}{i}es how queries and data interact, to realise the W3C recommendations. The operational semantics are realistic since there is no guarantee of maximal responses, only that responses are correct.

Two SPARQL Queries may be indistinguishable with respect to their operational behaviour. Such operationally equivalent queries are bisimilar. In this work, bisimulation is used to derive an algebra over SPARQL Queries. The algebra agrees with expected equivalences analogous to those uncovered by relational algebra and exposes some new equivalences. The derived algebra can be used to rewrite a query to a normal form. Normal forms are useful for optimisation purposes. A query can be optimised before being distributed over multiple data sources. Distribution of queries is a key challenge for enabling a Web of Data~\cite{Bizer2009}.

Section~\ref{section:syntax} presents a syntax and semantics for RDF triples, SPARQL queries and processes which internalise both triples and queries.
Section~\ref{section:lts} provides an alternative operational semantics using a labelled transition system. The labelled transition system is proven to be sound with respect to the reduction system.
Section~\ref{section:bisimulation} introduces two notions of equivalence over the calculus, which correspond to the two operational semantics. Bisimulation for the labelled transition system is proven to be complete with respect to contextual equivalence for the reduction system. An algebra for queries is veri{f}{i}ed using bisimulation.

\section{A syntax and semantics for the syndication calculus}
\label{section:syntax}

The concrete syntax for both RDF and SPARQL Query are speci{f}{i}ed in W3C recommendations~\cite{Carroll2004,Seaborne2008}. Here an abstract syntax is presented to model the core features of the concrete syntax. This abstract syntax is easier to de{f}{i}ne than the concrete syntax, which is sugared to make programming easier.

The operational semantics of the calculus is speci{f}{i}ed as a reduction system. The syntax and rules of the reduction system borrow from a fragment of Linear Logic, extended with a continuation. Related work has investigated other approaches to using Linear Logic for both query languages and process calculi~\cite{Kobayashi93,Miller1994,Scott1994}.

Note that the description of the syntax and reduction system is brief. A similar syntax and reduction system are extensively discussed in the thesis of the {f}{i}rst author~\cite{Horne2011}. The main contribution of this paper is the bisimulation results for queries.

\subsection{A syntax for RDF triples}

An abstract syntax for triples conveys the RDF data format. The atoms of the syntax are names and literals. Names represent occurrences of URIs, which are represented by identi{f}{i}ers in italics, such as $\textit{John}$ or $\textit{knows}$. Literals are basic data values, such as the strings \mstring{Paul} or \mstring{77-3426}. The de{f}{i}nition of literals in the XML Schema Datatypes speci{f}{i}cation~\cite{Biron2004} is assumed. Variables $a ,b\hdots$ and $x, y\hdots$ represent place holders for names and literals respectively.

A triple consists of three components: the subject, the predicate and the object, which is written $\triple{\textit{subject}}{\textit{predicate}}{\textit{object}}$. The subject is related by the predicate to the object, similarly to simple sentences in English of form subject-verb-object, where URIs and literals are used instead of words. The syntax ensures that literals can only appear as the object of a triple. The example below presents two RDF triples.
\[
 \triple{b_4}{\textit{home}}{\textit{\ra}}
\qquad\qquad
 \triple{b_4}{\textit{give\_name}}{\mstring{Ringo}}
\]

Predicates are names such as $\textit{home}$. For instance, the {f}{i}rst triple above means that a subject $b_4$ is related by predicate $\textit{home}$ to object $\textit{\ra}$. The second triple above indicates that subject $b_4$ is related by predicate $\textit{given\_name}$ to the literal \mstring{Ringo}.

\subsection{A syntax for SPARQL queries}
\label{section:processes}

In this section an abstract syntax for queries, Fig.~\ref{figure:syntax-processes}, represents the core features of SPARQL Query~\cite{Seaborne2008}. SPARQL Queries are used to read from RDF triples. Synchronisation constructs allow substantial queries to be expressed. The syntax of processes, also in Fig.~\ref{figure:syntax-processes}, demonstrates how both queries and content can be internalised in a process calculus, which suggests a high level language for Linked Data, which uses query results. In this model, persistently stored triples are used to answer queries. A stored triple is indicated by an underscore.

\begin{figure}
\begin{gather*}
\begin{array}{rlr}
\phi ::= & \cunit            & \mbox{true} \\
          \mid & \zero             & \mbox{false} \\
          \mid & \phi \vee \phi    & \mbox{or} \\
          \mid & \phi \wedge \phi  & \mbox{and} \\
          \mid & \neg \phi         & \mbox{not} \\
          \mid & \hdots            & \mbox{etc.}
\end{array}
\qquad\qquad
\begin{array}{rlr}
U ::= & \ask{C}                               & \mbox{asked triple} \\
       \mid & \filter{\phi}                         & \mbox{{f}{i}lter} \\
       \mid & U \oplus U                            & \mbox{choice} \\
       \mid & U \tensor U                           & \mbox{tensor} \\
       \mid & \cselect{a} U                         & \mbox{select name} \\
       \mid & \cselect{x} U                         & \mbox{select literal} \\
       \mid & \exponential U                        & \mbox{iteration} \\
       \mid & U \then P                             & \mbox{then}
\end{array}
\qquad\qquad
\begin{array}{rlr}
P ::= & \bot            & \mbox{nothing} \\
       \mid & P \cpar P       & \mbox{par} \\
       \mid & \cscope{a} P    & \mbox{blank node} \\
       \mid & U               & \mbox{query} \\
       \mid & \ins{C}         & \mbox{stored triple} \\
\end{array}
\end{gather*}
\caption{The syntax of constraints ($\phi$), queries ($U$) and processes ($P$), over triples ($C$).}
\label{figure:syntax-processes}
\end{figure}

\paragraph{Ask queries and multiplicative operators.}

The simplest `ask' query provides a triple to be matched. There are three multiplicative operators: a tensor product ($\tensor$) for synchronously joining queries, a par operator ($\cpar$) for composing processes in parallel and the operator then ($\then$) for guarding a process with a query. The difference between tensor and par is that queries composed using tensor must happen simultaneously (in the same atomic step), whereas processes composed in parallel may be used in different atomic steps. Tensor is the implicit join of queries used in SPARQL. Then and par are part of a higher level language, where query results are immediately used. These operators are multiplicative since they control the sharing of resources.

\paragraph{The additive operators and select queries.}

There are three additive operators: choose ($\oplus$), select ($\bigvee$) and the blank node quanti{f}{i}er ($\bigwedge$). The choose operator presents a choice between two queries, hence models the SPARQL keyword $\texttt{UNION}$.
The select operator is a quanti{f}{i}er which binds a variable. Select is used to model \texttt{SELECT} queries in SPARQL, which discover names and literals. The names and literals discovered can also be bound in a continuation process, hence value passing is modelled at a high level.
Blank node quanti{f}{i}ers provide a model for blank nodes in RDF~\cite{Carroll2004}. A blank node is a local name where the scope of the blank node is indicated by the scope of the quanti{f}{i}er. Blank nodes allow further data structures to be represented in RDF, including XML.

\paragraph{Constraints and optional queries.}

A constraint may be used in a query. Constraints form a Boolean algebra of basic predicates, such as inequalities and regular expressions. The speci{f}{i}cation of constraints can be found under the keyword \texttt{FILTER} in the recommendation~\cite{Seaborne2008}. A choice between a query and true models an optional query in SPARQL, so the keyword $\texttt{OPTIONAL}$ is de{f}{i}ned as follows: $\texttt{OPTIONAL}\,U \triangleq U \oplus \cunit$.

\paragraph{Repeated queries and iteration.}

A common requirement of a query language is that more than one result can be obtained. Bounded multiple copies of queries can be synchronously posed, using queries with natural number exponents and {f}{i}nite sums. Exponents and sums are just abbreviations de{f}{i}ned as follows.
\[
U^{0} \triangleq \cunit
\qquad
U^{n+1} \triangleq U \otimes U^{n}
\qquad
\Sigma_{n=0}^{0} U^{n} 
\triangleq
\cunit
\qquad
\Sigma_{n=0}^{k + 1} U^{n}
\triangleq
\Sigma_{n=0}^{k} U^{n} \oplus U^{k+1}
\]
A natural number exponent $n$ repeatedly applies the tensor product, so the query must be answered exactly $n$ times. The sum with bound $n$ allows the query to be answered between $0$ and $n$ times. Sums model the keyword \texttt{LIMIT}, such that $U\,\texttt{LIMIT}\,k \triangleq \Sigma_{n=0}^{k} U^n$.

Unbounded iteration of queries is indicated by an explicit operator ($\exponential$), which allows zero or more copies of a query to be answered. Note that iteration differs from replication in common process calculi. All copies of an iterated query must be answered simultaneously using disjoint resources.

\subsection{A reduction system for the calculus}
\label{section:operational-semantics}

The reduction system presents a concise operational semantics for the calculus. The reduction system is de{f}{i}ned by a structural congruence and a relation over processes called the commitment relation. A further preorder over triples formalises key features of RDF Schema (RDFS~\cite{Brickley2004}). RDFS is a light extension to RDF, which improves interoperability by resolving aliases between URIs.

The structural congruence ($\equiv$ in Fig.~\ref{figure:structure}) is de{f}{i}ned such that $(P, \cpar, \cend)$ forms a commutative monoid. Alpha conversion can also be applied to blank node quanti{f}{i}ers. Furthermore, blank node quanti{f}{i}ers can be eliminated in the presence of nothing, commute and distribute over par.  All reductions are considered up to structural congruence --- as standard in process calculi.
%

\begin{figure}
\begin{gather*}
P \cpar \bot \equiv P
\qquad
P \cpar Q \equiv Q \cpar P
\qquad
P \cpar (Q \cpar R) \equiv (P \cpar Q) \cpar R
\\[12pt]
\cscope{a} \cend \equiv \cend
\qquad
\cscope{a} \cscope{b} P \equiv \cscope{b} \cscope{a} P
\qquad
\cscope{a} P \cpar Q \equiv \cscope{a} (P \cpar Q) \quad a \not \in \fn{P}
\end{gather*}
\caption{The structural congruence over processes.}
\label{figure:structure}
\end{figure}

The commitment relation ($\Trans$ in Fig.~\ref{figure:transitions}) speci{f}{i}es atomic operational steps. The process on the left of the commitment relation, becomes the process on the right. A commitment is performed atomically.

\begin{figure}
\begin{gather*}
\infer{
 \ins{C} \cpar \ask{D}
 \Trans
 \ins{C}
}{
 C \sqsubseteq D
}
\qquad
\infer{
 \filter{\phi} \Trans \cend
}{
 \vDash \phi
}
\qquad
\infer{
 P \cpar \left(U \oplus V\right) \Trans Q
}{
 P \cpar U \Trans Q
}
\qquad
\infer{
 P \cpar \left(U \oplus V\right) \Trans Q
}{
 P \cpar V \Trans Q
}
\qquad
\infer{
 P \cpar Q \cpar \left(U \tensor V\right)
 \Trans
 P' \cpar Q'
}{
 P \cpar U \Trans P'
 &
 Q \cpar V \Trans Q'
}
\\[12pt]
\exponential U \Trans \cend
\qquad
\infer{
 P \cpar \exponential U \Trans Q
}{
 P \cpar U \Trans Q
}
\qquad
\infer{
 P \cpar \exponential U
 \Trans Q
}{
 P \cpar \left(\exponential U \tensor \exponential U\right)
 \Trans Q
}
\qquad
\infer{
 P \cpar \cselect{a} U \Trans Q
}{
 P \cpar U\sub{a}{b} \Trans Q
}
\qquad
\infer{
 P \cpar \cselect{x} U \Trans Q
}{
 P \cpar U\sub{x}{v} \Trans Q
}
\\[12pt]
\infer{
 P \cpar \left(U \then R\right) \Trans Q \cpar R
}{
 P \cpar U \Trans Q 
}
\qquad
\infer{
 P \cpar Q \Trans P' \cpar Q
}{
 P \Trans P'
}
\qquad
\infer[
 a \not \in
 \fn{
  \begin{array}{l}
  P, P', \beta
  \end{array}
 }
]{
 P \cpar \cscope{a}Q \Trans P' \cpar \cscope{a}Q'
}{
 P \cpar Q \Trans P' \cpar Q'
}
\end{gather*}
\caption{
 Commitment rules: ask, {f}{}{i}lter, choose left, choose right, tensor, weakening, dereliction, contraction, select name, select literal, guard, context, and blank node ($\mathrm{fn}$ indicates the free names).
}
\label{figure:transitions}
\end{figure}

Working with aliases for URIs is a key problem in Linked Data~\cite{Alani2002}. Aliases arise since different data sources use different URIs for similar purposes. For instance, in the context of a song, predicate $\textit{lyricist}$ may be more speci{f}{i}c than predicate $\textit{creator}$ (see subPropertyOf in RDFS~\cite{Brickley2004}). Similarly, $\textit{song}_0$ and $\textit{song}_1$ may be URIs for the same song (see sameAs in OWL~\cite{Alani2002}). Hence the aliases $\textit{lyricist} \sqsubseteq \textit{creator}$ and $\textit{song}_0 \sqsubseteq \textit{song}_1$ may be assumed. The application speci{f}{i}c set of alias assumptions is referred to as $\beta$. The transitive reflexive closure of $\beta$ gives rise to a preorder ($\sqsubseteq$) over URIs.

\paragraph{The ask axiom, guard rule and alias assumptions.}

The following example demonstrates the interaction of an ask query with a continuation and a stored triple.
The axiom `ask' allows a query triple and a stored triple to interact. The stored triple remains available after the commitment. The axiom `guard' makes the continuation process available after the commitment. 
\[
 \cres{
  \triple{\textit{song}_0}{\textit{lyricist}}{b_4}
 }
 \cpar
 \left(
  \cguard{
   \triple{\textit{song}_1}{\textit{creator}}{b_4}
  } P
 \right)
\Trans
 \cres{
  \triple{\textit{song}_0}{\textit{lyricist}}{b_4}
 }
 \cpar
 P
\]
\noindent

Above, the conditions for a match are relaxed by the preorder over triples ($\sqsubseteq$). The preorder over triples is the point-wise extension of the preorder over URIs introduced above.


\paragraph{The tensor and select rules.}

The following example demonstrates two synchronised queries, in the presence of two stored triples. The {f}{i}rst query poses a pattern to match, while the second query selects a name with respect to a pattern.
\[
\begin{array}{l}
 \left(
  \ask{
   \triple{b_2}{\textit{role}}{\textit{singer}}
  }
  \tensor
  \cselect{b}
  \left(
  \ask{
   \triple{b}{\textit{role}}{\textit{guitarist}}
  }
  \then P
  \right)
 \right)
 \cpar
\\
 \cres{
  \triple{b_2}{\textit{role}}{\textit{singer}}
 }
 \cpar
 \cres{
  \triple{b_3}{\textit{role}}{\textit{guitarist}}
 }
\end{array}
\Trans
\begin{array}{l}
 \cres{
  \triple{b_2}{\textit{role}}{\textit{singer}}
 }
 \cpar
 \\
 \cres{
  \triple{b_3}{\textit{role}}{\textit{guitarist}}
 }
 \cpar
 P \sub{b}{b_3}
\end{array}
\]
\noindent
In the above example, the `tensor' rule divides the stored triples between the two parts of the query. On the left the `select' rule is applied. The `select' rule substitutes a suitable URI for the quanti{f}{i}ed name. The result is that a URI is passed to the continuation.

\paragraph{The choose rule.}

The following example demonstrates a choice between queries. The `choose left' rule is used in this case.
\[
\cselect{a}
\left(
 \left(
 \ask{
  \triple{a}{\textit{knows}}{b_2}
 }
 \then P
 \right)
 \oplus
 \left(
 \ask{
  \triple{b_2}{\textit{knows}}{a}
 }
 \then Q
 \right)
\right) 
 \cpar
 \cres{
  \triple{b_1}{\textit{knows}}{b_2}
 }
\Trans
 \cres{
  \triple{b_1}{\textit{knows}}{b_2}
 }
 \cpar
 P\sub{a}{b_1}
\]
The query result determines the continuation triggered.

\paragraph{Constraints in queries.}

The example query below selects a literal. The data literal appears in a triple and a constraint. The rules ensure that both a suitable triple appears and the constraint imposed holds.
\[
 \cselect{x}
 \left(
 \filter{
  \left(|x| \leq 5\right)
 }
 \otimes
 \ask{
  \triple{b_1}{\textit{name}}{x}
 }
 \then P
 \right)
 \cpar 
 \cres{
  \triple{b_1}{\textit{name}}{\mstring{John}}
 }
\Trans
 \cres{
  \triple{b_1}{\textit{name}}{\mstring{John}}
 }
 \cpar
 P\sub{x}{\mstring{John}}
\]
\noindent The satisfaction relation for evaluating constraints $\vDash$, is left to the W3C recommendation~\cite{Seaborne2008}. Satisfaction is assumed to de{f}{i}ne a Boolean algebra of constraints.

\paragraph{The rules for iteration of queries.}

The example below demonstrates iteration used to answer two copies of the same query. Two iterated queries are answered using `dereliction', which are combined using the conventional tensor rule. The `contraction' rule then reduces the combined queries to a single query.
\[
 \exponential
  \cselect{c}
  \left(
   \ask{
    \triple{c}{\textit{is}}{\textit{busy}}
   }
   \then P
  \right)
 \cpar
 \cres{
  \triple{b_2}{\textit{is}}{\textit{busy}}
 }
 \cpar
 \cres{
  \triple{b_3}{\textit{is}}{\textit{busy}}
 }
\Trans
 \cres{
  \triple{b_2}{\textit{is}}{\textit{busy}}
 }
 \cpar
 \cres{
  \triple{b_3}{\textit{is}}{\textit{busy}}
 }
 \cpar
 P\sub{c}{b_2}
 \cpar
 P\sub{c}{b_3}
\]
\noindent
A continuation for each result is triggered. Note the `weakening' rule could be used to allow the query to be answered zero times.

\paragraph{Blank nodes as quanti{f}{i}ers.}

The example below demonstrates a query which discovers a blank node. The `blank node' rule uses a temporary name to represent the blank node. The result is that the scope of the blank node quanti{f}{i}er is extended to include the continuation, which receives the blank node.
\[
\begin{array}{l}
\cselect{c}
\left(
\cguard{
 \triple{c}{\textit{creator}}{b_2}
} U
\right)
\cpar
\\
\cscope{a}
\left(
 \cres{
  \triple{a}{\textit{author}}{b_2}
 }
 \cpar
 \cres{
  \triple{a}{\textit{status}}{\textit{open}}
 }
\right)
\end{array}
\Trans
\cscope{a}
\left(
 \begin{array}{l}
 U\sub{c}{a}
 \cpar
 \\
 \cres{\triple{a}{\textit{author}}{b_2}}
 \cpar
 \cres{
  \triple{a}{\textit{status}}{\textit{open}}
 }
 \end{array}
\right)
\]
The alias $\textit{author} \sqsubseteq \textit{creator}$ is assumed above. The temporary name must not appear in the alias assumptions ($\beta$). The unused stored triple is idled.

Rules for an additive disjunction, tensor product, existential quanti{f}{i}cation, universal quanti{f}{i}cation and iteration, are borrowed from Linear Logic~\cite{Girard1987}. The sequent calculus is extended to indicate a continuation process, constraints extend the basic units with a Boolean algebra, and a preorder accommodates aliases over names.

\section{A labelled transition system for the operational semantics}
\label{section:lts}

The operational semantics can be expressed as a labelled transition system. This provides an alternative operational semantics to the reduction system. This alternative semantics allows the behaviour of queries and data to be evaluated separately and then composed. Lemma~\ref{lemma:labels} veri{f}{i}es that the labelled transition system and reduction system describe the same behaviour.

\subsection{The purpose of labels}

A labelled transition consists of two processes and a label. The {f}{i}rst process is the process before the transition. The label is a constraint on the context in which a transition can take place. The second process is the resulting process after the transition.

The labels are formed from a commutative monoid over triples $(E, \comb, \cunit)$. A label indicates the inputs and outputs of a process. An input indicates that a process can proceed if it can receive the triples on the label from its context. An output indicates that a process outputs the triple on the label to its context. For instance, the query below inputs a triple; while the stored triple below outputs a triple. 
\[
\ask{
 \triple{b_4}{\textit{knows}}{b_3}
}
\then
P
\lts{
 \triple{b_4}{\textit{knows}}{b_3}
}
P
\qquad\qquad
\ins{
 \triple{b_4}{\textit{knows}}{b_3}
}
\lts{
 \co{
  \triple{b_4}{\textit{knows}}{b_3}
 }
}
\ins{
 \triple{b_4}{\textit{knows}}{b_3}
}
\]
A relevant interpretation is that the {f}{i}rst transition above is an action from the perspective of a client which resolves a query; whereas the second is an action from the perspective of a server that provides a triple. Two processes composed in parallel with matching inputs and outputs may interact. For instance, the above processes can be composed, resulting in the following transition. The unit label indicates an operational step without side effects.
\[
\ask{
 \triple{b_4}{\textit{knows}}{b_3}
}
\then
P
\cpar
\ins{
 \triple{b_4}{\textit{knows}}{b_3}
}
\lts{
 \unobservable
}
P
\cpar
\ins{
 \triple{b_4}{\textit{knows}}{b_3}
}
\]

Output labels can also indicate extruded names. For instance, the example below extrudes the name $a$. The extruded names represent blank nodes where the scope of the blank node quanti{f}{i}er may be extended. This is similar to extrusion of new names in the $\pi$-calculus~\cite{Milner1992}.
\[
\cscope{a}
\ins{
 \triple{a}{\textit{has}}{\textit{paper}}
}
\cpar
\ins{
 \triple{b_2}{\textit{has}}{\textit{stone}}
}
\lts{
 a \mid
 \co{
  \triple{a}{\textit{has}}{\textit{paper}}
 }
}
\ins{
 \triple{a}{\textit{has}}{\textit{paper}}
}
\cpar
\ins{
 \triple{b_2}{\textit{has}}{\textit{stone}}
}
\]
The commutative monoid rules can always be applied to reorder labels.

\subsection{Labelled transitions for queries}

The input transitions allow the behaviour of a query to be modelled independently. The rules for queries are presented in Fig.~\ref{figure:lts-1}. The rules accumulate RDF triples on an input label, which represents contexts in which a query may be answered.

\begin{figure}
\begin{gather*}
\infer{
 \ask{D} \inlts{C} \cend
}{
 C \sqsubseteq D
}
\qquad
\infer{
 U \then P \lts{E} Q \cpar P
}{
 U \lts{E} Q
}
\qquad
\infer{
 U \tensor V \inlts{E \comb F} P \cpar Q
}{
 U \inlts{E} P
 &
 V \inlts{F} Q
}
\qquad
\infer{
 U \oplus V \inlts{E} P
}{
 U \inlts{E} P
}
\qquad
\infer{
 U \oplus V \inlts{E} Q
}{
 V \inlts{E} Q
}
\\[10pt]
\infer{
 \filter{\phi} \inlts{\unobservable} \cend
}{
 \vDash \phi
}
\qquad
\infer{
 \cselect{a} U \inlts{E} Q
}{
 U\sub{a}{b} \inlts{E} Q
}
\qquad
\infer{
 \cselect{x} U \inlts{E} Q
}{
 U\sub{x}{v} \inlts{E} Q
}
\qquad
\exponential U \lts{\unobservable} \cend
\qquad
\infer{
 \exponential U \inlts{E} P
}{
 U \inlts{E} P
}
\qquad
\infer{
 \exponential U \inlts{E} P
}{
 \exponential U \tensor \exponential U \inlts{E} P
}
\end{gather*}
\caption{
 Labelled transitions for queries: input triple, trigger guard, tensor, choose left, choose right, {f}{i}lter, select name, select literal, weakening, dereliction and contraction.
}
\label{figure:lts-1}
\end{figure}

The `input triple' rule poses the triple as an input on the label. The triple on the label may be strengthened by the preorder over triples. The `trigger guard' rule allows a continuation process to be triggered exposing the continuation. The following example demonstrates a query consisting of a single triple and a continuation process, where the preorder $\textit{colleague} \sqsubseteq \textit{knows}$ is assumed.
\[
\ask{
 \triple{b_4}{\textit{knows}}{b_3}
}
\then
P
\lts{
 \triple{b_4}{\textit{colleague}}{b_3}
}
P
\]

Select quanti{f}{i}ers are resolved by anticipating the name or literal to input. For instance, the following labelled transition indicates that the query can be answered in a context where a name is chosen. The same name is passed to the continuation process.
\[
\cselect{a}
\left(
\ask{
 \triple{b_4}{\textit{knows}}{a}
}
\then
P
\right)
\lts{
 \triple{b_4}{\textit{knows}}{b_3}
}
P \sub{a}{b_3}
\]

Choices are resolved by anticipating the left or right branch. For instance, the following transition indicates the label and continuation which results from choosing the left branch.
\[
\left(
\ask{
 \triple{b_4}{\textit{knows}}{b_2}
}
\then P
\right)
\oplus
\left(
\ask{
 \triple{b_4}{\textit{knows}}{b_3}
}
\then Q
\right)
\lts{
 \triple{b_4}{\textit{knows}}{b_2}
}
P
\]

Tensor synchronises two queries, by composing their respective labels and continuations. For instance, the following query simultaneously inputs two triples. The continuations of both queries are triggered in parallel, with the appropriate substitutions.
\[
\begin{array}{l}
\cselect{a}
\left(
\left(
\ask{
 \triple{b_4}{\textit{knows}}{a}
}
\then
P
\right)
\join
\left(
\cselect{x}
\ask{
 \triple{a}{\textit{name}}{x}
}
\then Q
\right)
\right)
\lts{
 \triple{b_4}{\textit{knows}}{b_2}
 \comb
 \triple{b_2}{\textit{name}}{\mstring{John}}
}
P \sub{a}{b_2} \cpar Q \sub{a,x}{b_2,\mstring{John}}
\end{array}
\]

A constraint is disposed when it is satis{f}{i}ed. For instance, in the following query the length of a selected literal is constrained, but satis{f}{i}ed by the substitution.
\[
\cselect{x}
\left(
\ask{
 \triple{b_2}{\textit{name}}{x}
}
\join
\left( \left| x \right| \leq 5 \right)
\then P
\right)
\lts{
 \triple{b_2}{\textit{name}}{\mstring{John}} 
}
P \sub{x}{\mstring{John}}
\] 

Iteration anticipates the number of copies of a query to pose using weakening, dereliction and contraction. For instance, two copies of the following query are posed using contraction and dereliction. The label indicates the two separate triples which are to be answered simultaneously. Both continuations are composed in parallel.
\[
\begin{array}{l}
\exponential
\cselect{a}
\left(
\ask{
 \triple{b_4}{\textit{knows}}{a}
}
\then
P
\right)
\lts{
 \triple{b_4}{\textit{knows}}{b_2}
 \comb
 \triple{b_4}{\textit{knows}}{b_3}
}
P \sub{a}{b_2}
\cpar
P \sub{a}{b_3}
\end{array}
\]

The rules of the labelled transition system are su{ff}{i}cient to model queries.

\subsection{Labelled transitions for an RDF store}

The behaviour of stored RDF triples can be modelled using output labels. The rules of output labels are presented in Fig.~\ref{figure:lts-2}. The names extruded on the label are indicated by $\alpha$, where $+$ indicates disjoint union of names. The abbreviation $\cscope{\alpha}P$ is used to indicate the quanti{f}{i}cation of all names in $\alpha$.

\begin{figure}
\begin{gather*}
\infer{
 \ins{C} \lts{\co{D}} \ins{C}
}{
 C \sqsubseteq D
}
\qquad\qquad
\infer[
 \begin{array}{l}
 a \not \in \fn{\beta}
 \end{array}
]{
 \cscope{a}P \lts{\alpha + a \mid \co{E}} Q
}{
 P \lts{\alpha \mid \co{E}} Q
}
\qquad\qquad
\infer[
 \begin{array}{l}
  a \not \in \alpha \cup \fn{E}
 \end{array}
]{
 \cscope{a} P
 \lts{\alpha \mid \co{E}}
 \cscope{a} Q
}{
 P \lts{\alpha \mid \co{E}} Q
}
\\[10pt]
\infer[
  \alpha \cap \fn{Q} = \emptyset
]{
 P \cpar Q \lts{\alpha \mid \co{E}} P' \cpar Q
}{
 P \lts{\alpha \mid \co{E}} P'
}
\qquad\quad
\infer[
 \begin{array}{l}
  \alpha_0 \cap \fn{Q} = \emptyset \\
  \alpha_1 \cap \fn{P} = \emptyset
 \end{array}
]{
 P \cpar Q \lts{\alpha_0 + \alpha_1 \mid \co{E \comb F}} P' \cpar Q'
}{
 P \lts{\alpha_0 \mid \co{E}} P'
 &
 Q \lts{\alpha_1 \mid \co{F}} Q'
}
\\[10pt]
\infer[
 \begin{array}{l}
  \alpha \cap (\fn{P} \cup \fn{E}) = \emptyset
 \end{array}
]{
 P \cpar Q \lts{E} \cscope{\alpha} \left(P' \cpar Q'\right)
}{
 P \lts{E \comb F} P'
 &
 Q \lts{\alpha \mid \co{F}} Q'
}
\end{gather*}
\caption{
 Process rules: output triple, open, blank node context, par context, parallel outputs and close. The symmetric versions of the par context and close rule are also assumed.
}
\label{figure:lts-2}
\end{figure}

Stored triples can output the triple on the label. The same triple appears in the continuation unchanged. The preorder over names may be used to weaken the output triple. Names are extruded on the label using the `open scope' rule. For instance, the following triple outputs a triple and extrudes the blank node, using the assumption $\textit{colleague} \sqsubseteq \textit{knows}$.
\[
\cscope{b_4}
\ins{
 \triple{b_4}{\textit{colleague}}{b_3}
}
\lts{
 b_4
 \mid
 \co{
  \triple{b_4}{\textit{knows}}{b_3}
 }
}
\ins{
 \triple{b_4}{\textit{colleague}}{b_3}
}
\]

Output labels composed in parallel can be combined. Extruded names on both labels must be disjoint to preserve the scope of blank nodes. For instance, the following transition simultaneously outputs two triples and extrudes three names.
\[
\begin{array}{l}
\cscope{b_4}
\left(
 \cscope{b_2}
 \ins{
  \triple{b_4}{\textit{knows}}{b_2}
 }
 \cpar
 \cscope{b_3}
 \ins{
  \triple{b_4}{\textit{knows}}{b_3}
 }
\right)
\\
\qquad\qquad\qquad\qquad\qquad\qquad\qquad
\lts{
 b_2, b_3, b_4
 \mid
 \co{
  \triple{b_4}{\textit{knows}}{b_2}
  \tensor
  \triple{b_4}{\textit{knows}}{b_3}
 }
}
\ins{
 \triple{b_4}{\textit{knows}}{b_2}
}
\cpar
\ins{
 \triple{b_4}{\textit{knows}}{b_3}
}
\end{array}
\]

Two parallel processes may interact using the close rule. Close allows complementary inputs and outputs to be matched. Names extruded on the output label are introduced as quanti{f}{i}ers in the continuation. Any inputs not answered remain on the resulting label, to be answered later. For instance, the following iterated query is answered twice. One copy is answered by the available process and the other copy must be answered by the context for the transition to occur. In the continuation, the scope of the blank node is extended.
\[
\begin{array}{l}
\exponential
\cselect{a}
\left(
 \triple{b_4}{\textit{knows}}{a}
 \then P
\right)
\cpar
\cscope{b_3}
\ins{
 \triple{b_4}{\textit{knows}}{b_3}
}
\lts{
 \triple{b_4}{\textit{knows}}{b_2}
}
\cscope{b_3}
\left(
 P \sub{a}{b_2}
 \cpar
 P \sub{a}{b_3}
 \cpar
 \ins{
  \triple{b_4}{\textit{knows}}{b_3}
 }
\right)
\end{array}
\]

The context rule for parallel composition allows a process which does not contribute to an interaction to idle. Similarly, the context rule for blank node quanti{f}{i}ers allows a blank node to be ignored in a transition if it does not appear on the label.

\subsection{Comparison of the two operational semantics}
\label{section:elimination}

To justify the labelled transition system, the labelled transitions are compared to the reductions of the reduction system. If a unit labelled transition can be derived then the corresponding reduction can also be derived. The signi{f}{i}cance is that, given the independent perspectives of the query and the store in terms of labelled transitions, their combination satis{f}{i}es the global perspective speci{f}{i}ed by the reduction system.



Scope extrusion presents technical dif{f}{i}culties. The following technical lemma reduces these dif{f}{i}culties, by eliminating scope extrusion. The proof demonstrates that combinations of opening names and closing names can be eliminated from a proof tree which uses an extruded name. 

\begin{lemma}[Elimination of extrusion]
\label{lemma:extrusion}
Suppose that a labelled transition proof uses name extrusion, but not in the conclusion. The same labelled transition, up to structural congruence, holds without any name extrusion.
\end{lemma}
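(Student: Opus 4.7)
The plan is to induct on the number of applications of the open scope rule inside the derivation. The base case (no applications) is immediate. For the inductive step, choose the application of open scope that lies closest to the conclusion of the overall derivation. Say this application extrudes a name $a$, deriving $\cscope{a}P \lts{\alpha + a \mid \co{E}} Q$ from $P \lts{\alpha \mid \co{E}} Q$.

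Since the conclusion of the whole derivation contains no extruded names by hypothesis, and since the close rule is the only rule that removes names from the extrusion part of a label, the name $a$ must be consumed by some close rule that occurs strictly below the chosen open scope on the path to the root. Call this the matching close. Between the chosen open and its matching close, every intermediate rule must carry $a$ through on its label; the only rules that admit this are par context (on whichever branch carries $a$) and parallel outputs. The blank node context rule is excluded here by its side condition $a \notin \alpha \cup \fn{E}$, and no further open scope intervenes by our choice of the open being the lowest.

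The transformation keeps $a$ bound throughout the subderivation instead of extruding it. Concretely: delete the chosen open scope step, so the $a$-branch carries $\cscope{a}$ as part of its source process; then replace each subsequent par context step on this branch by a blank node context step for $a$ together with one use of structural congruence $\cscope{a}R \cpar S \equiv \cscope{a}(R \cpar S)$ (applicable because $a$ was already fresh for the other operand in the original derivation). The parallel outputs step that would have combined $a$ with other extruded names is similarly rewritten. At the matching close rule, the residual extruded set is now $\gamma \setminus \{a\}$, and the rule produces a continuation $\cscope{\gamma \setminus \{a\}}(P_0' \cpar Q_0')$; the outer $\cscope{a}$ now carried by the blank node context rules is structurally congruent to placing $a$ back into $\gamma$, so the final conclusion agrees with the original up to $\equiv$.

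This eliminates one application of open scope while preserving the conclusion modulo structural congruence; the inductive hypothesis then removes the rest. The main obstacle I expect is the bookkeeping of freshness side conditions: each inserted blank node context step demands $a \notin \alpha' \cup \fn{E')}$ at the local label, and each application of $\cscope{a}R \cpar S \equiv \cscope{a}(R \cpar S)$ demands $a \notin \fn{S}$. Both follow from the fact that $a$ was a freshly extruded name in the original derivation, together with the side conditions already verified by the par context, parallel outputs, and close rules, but they must be checked case by case as the transformation is pushed past each rule.
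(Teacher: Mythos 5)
Your high-level plan --- induct on the number of open-scope applications and cancel the lowest open against the close rule that must eventually consume the extruded name --- is the same ``eliminate open/close pairs'' idea the paper describes. The execution, however, has a genuine gap at its central step. Between the chosen open and its matching close you keep $\cscope{a}$ on the source processes and propagate the transition under the binder by inserting blank node context steps, asserting that their side condition $a \notin \alpha' \cup \fn{E'}$ follows from the freshness of the extruded name. It does not: an extruded name in general occurs free in the output triples carried on the label between the open and the close --- that is precisely why the original derivation had to use the open rule instead of the blank node context rule. For instance, in the paper's own example
\[
\cscope{b_4}\ins{\triple{b_4}{\textit{colleague}}{b_3}}
\lts{b_4 \mid \co{\triple{b_4}{\textit{knows}}{b_3}}}
\ins{\triple{b_4}{\textit{colleague}}{b_3}}
\]
the name $b_4$ is extruded exactly because it occurs in the output label, so every blank node context step you would insert above the matching close is blocked by $a \in \fn{E'}$, except in the degenerate case of an open that extrudes a name not occurring on the label. (A smaller slip: a blank node context step for a \emph{different} name can also sit between the open and the close, so the case analysis of intermediate rules is incomplete, though such steps are harmless.)

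The repair is to re-bind $a$ \emph{below} the matching close rather than above it. Keep the subderivation rooted at the premise of the deleted open, and let the intervening par context and parallel outputs steps act on processes that simply omit the binder; the side conditions already checked in the original derivation place $a$ outside the free names of every other parallel component and outside $\fn{P} \cup \fn{E}$ at the close, so the resulting source process is structurally congruent to the original one via $\cscope{a}R \cpar S \equiv \cscope{a}(R \cpar S)$ and commutation of binders. Perform the close with extruded set $\gamma \setminus \{a\}$; its conclusion label is the residual input $E$, and now the original close side condition does give $a \notin \fn{E}$, so a single blank node context application re-introduces $\cscope{a}$, and the target $\cscope{a}\cscope{\gamma \setminus \{a\}}\left(P' \cpar Q'\right)$ is structurally congruent to the original $\cscope{\gamma}\left(P' \cpar Q'\right)$. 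With the binder delayed in this way, the induction you set up (one fewer open per step, conclusion preserved up to $\equiv$) goes through.
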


Note that full proofs for all theorems are provided in the thesis of the {f}{i}rst author~\cite{Horne2011}.

Every completed labelled transition can also be expressed as a reduction, Lemma~\ref{lemma:labels}. The proof works by transforming proof trees so that labels used in interactions are eliminated.
%
\begin{lemma}[Elimination of labels]
\label{lemma:labels}
$P \lts{\unobservable} Q$ if and only if $P \Trans Q$.
\end{lemma}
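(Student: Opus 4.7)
The plan is to prove both directions by induction on derivations, viewing the labelled and reduction systems as two perspectives of the same behaviour that must be reconciled.

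For the direction $P \Trans Q$ implies $P \lts{\unobservable} Q$, I would induct on the reduction derivation. For each commitment axiom of Fig.~\ref{figure:transitions}, I construct a labelled transition that uses the analogous query rule from Fig.~\ref{figure:lts-1} on the query side (so that the required triples accumulate on an input label $E$), the output rules from Fig.~\ref{figure:lts-2} on the stored side (producing $\co{E}$), and the close rule to cancel them, yielding the unit label. For example, the ask reduction $\ins{C} \cpar \ask{D} \Trans \ins{C}$ with $C \sqsubseteq D$ is obtained by closing $\ask{D} \inlts{C} \cend$ against $\ins{C} \lts{\co{C}} \ins{C}$. The structural rules (context, blank node, tensor) translate using their labelled counterparts (par context, blank node context, and the tensor LTS rule closed against a parallel-outputs derivation).

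For the converse, I would first invoke Lemma~\ref{lemma:extrusion} to rewrite the proof tree of $P \lts{\unobservable} Q$ into an equivalent one that avoids name extrusion, which is sound because the final label contains no extruded names. Since the conclusion carries the unit label, the normalised tree has a close rule at its root, with an input premise built from the query rules of Fig.~\ref{figure:lts-1} and an output premise built from the output-triple, par-context, and parallel-outputs rules of Fig.~\ref{figure:lts-2}. I then induct on this tree, matching each configuration with a single reduction rule: an ask input closed against an output yields the ask axiom, a filter premise yields filter, choose branches yield the two choose rules, a guard input yields guard, select name and literal yield their homonymous rules, weakening, dereliction and contraction on the query side translate directly, and par-context corresponds to the reduction context rule.

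The main obstacle will be the tensor case. When the input premise ends with $U \tensor V \inlts{E \comb F} P' \cpar Q'$, derived from $U \inlts{E} P'$ and $V \inlts{F} Q'$, and the matching output premise produces a combined label $\co{E \comb F}$, I must show that the output premise can always be rearranged so that it splits into two output derivations producing $\co{E}$ and $\co{F}$ from disjoint portions of the process. Such a split permits two smaller close derivations yielding, by induction, reductions $P \cpar U \Trans P'$ and $Q \cpar V \Trans Q'$, which then combine via the tensor reduction rule. Establishing this redistribution relies on the commutative monoid structure of labels, the compositional form of parallel-output and par-context applications, and structural congruence used to group the stored triples accordingly.
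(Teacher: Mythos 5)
Your proposal follows essentially the same route as the paper: the hard direction proceeds by transforming the proof tree, using Lemma~\ref{lemma:extrusion} to eliminate name extrusion, and the key tensor case is handled exactly as in the paper's own proof --- split the combined output $\co{E \comb F}$ into its two components, close each against the corresponding input to obtain two unit-labelled transitions, apply the induction hypothesis to get $P \cpar U \Trans P'$ and $Q \cpar V \Trans Q'$, and recombine with the tensor reduction rule. (Your claim that the normalised tree must end in a close rule is slightly too strong, since filter, weakening and the context rules can also conclude a unit label, but your subsequent case analysis covers these, so it is only a wording issue.)
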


Thus the local perspective of the labelled transition system and the global perspective of the reduction system specify the same operational capabilities.

\section{An algebra for the syndication calculus}
\label{section:bisimulation}

In this section bisimulation is introduced as the natural notion of equivalence over the labelled transition system. Bisimulation is demonstrated to be sound with respect to equivalence in the reduction system. Thus every pair of bisimilar processes are equivalent with respect to the natural notion of equivalence over the reduction system. Bisimulation is then used to verify an algebra over queries and processes.

\subsection{Bisimulation}

Processes which are capable of the same observable behaviour can be regarded as equivalent. The observable behaviour of a process is given by the labels of the labelled transition system. Observational equivalence of processes is established using the technique of (strong) bisimulation, as follows.
\begin{definition}[Bisimulation]
\label{definition:bisimulation}
Bisimulation, written $\sim$, is the greatest symmetric relation such that the following holds, for any label $l$. If $P \sim Q$ and $P \lts{l} P'$ then there exists some $Q'$ such that $Q \lts{l} Q'$ and $P' \sim Q'$.
\end{definition}

The following veri{f}{i}es that bisimulation is a congruence --- a relation which holds in any context. It is necessary that bisimulation is a congruence for it to be used as an algebra. A context is a process with a place holder for some syntax.
\begin{lemma}[Bisimulation is a congruence]
\label{lemma:context}
If $P \sim Q$ and $\mathcal{C}$ is a context, then $\mathcal{C} P \sim \mathcal{C} Q$.
\end{lemma}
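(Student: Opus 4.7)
The plan is to show that the relation
\[
\mathcal{R} = \{(\mathcal{C} P, \mathcal{C} Q) : P \sim Q\}
\]
is itself a bisimulation, so that maximality of $\sim$ forces $\mathcal{R} \subseteq {\sim}$ and delivers the congruence result. By induction on the depth of $\mathcal{C}$ this reduces to verifying closure of $\sim$ under each one-hole constructor of the calculus: the process-level constructors $R \cpar [\cdot]$ and $\cscope{a}[\cdot]$, together with the query-level constructors $[\cdot] \then R$, $[\cdot] \oplus V$, $[\cdot] \tensor V$, $\cselect{a}[\cdot]$, $\cselect{x}[\cdot]$ and $\exponential [\cdot]$, plus symmetric variants.

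For each constructor I would run a case analysis on the final rule of a transition $\mathcal{C} P \lts{l} T$, apply Definition~\ref{definition:bisimulation} on $P \sim Q$ to obtain a matching $Q$-transition for the $P$-subterm, and reapply the same rule on the $Q$-side. For the query constructors and for $[\cdot] \then R$ the match is essentially immediate, since each such context is consumed by a unique rule of Fig.~\ref{figure:lts-1}. The parallel composition $R \cpar P \lts{l} T$ is more involved: the transition may come from $R$ or $P$ alone via the par-context rule, from parallel-outputs combining outputs of both sides, or from close matching an input of one side against an output of the other. In every subcase the $P$-component is matched through $P \sim Q$ and the same combining rule is applied on the $Q$-side; the freshness conditions on extruded names are discharged by alpha-converting the binders in the context away from $\fn{R} \cup \fn{Q}$.

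The main obstacle is the scope case $\cscope{a} P \sim \cscope{a} Q$, because the open-scope rule can extrude $a$ onto the label before $P$ itself transitions, forcing us to match transitions whose labels carry bound names and whose derivations intertwine scope-opening with other rules. Lemma~\ref{lemma:extrusion} is the pivotal tool here: for transitions whose conclusion carries no extruded names it lets us assume the derivation itself uses no extrusion, so that a premise transition of $P$ is directly available for matching via $P \sim Q$ and the resulting $Q$-transition can be lifted back by the blank-node-context rule. For transitions that do extrude $a$, the derivation must begin with a transition of $P$ whose label mentions the triple to be extruded; we match that via $P \sim Q$ and then reapply open-scope to $\cscope{a} Q$, alpha-converting to a fresh representative whenever $a$ clashes with $\fn{Q}$.
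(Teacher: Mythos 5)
There is a genuine gap, and it sits in the cases you dismiss as ``essentially immediate''. First, the select binders. A context may place the hole under a select quantifier, e.g.\ $\cselect{a}(\ask{C} \then [\cdot])$, and the select rules of Fig.~\ref{figure:lts-1} derive a transition of $\cselect{a}U$ from a transition of the \emph{substitution instance} $U\sub{a}{b}$. So a transition of $\cselect{a}\,\mathcal{C}'P$ is justified by a transition of $(\mathcal{C}'P)\sub{a}{b}$, whereas your candidate relation $\mathcal{R}$ and your inductive hypothesis only speak about $\mathcal{C}'P$ versus $\mathcal{C}'Q$, not about their instances. To close this case you need that $\sim$ is preserved by substitution of names (and literals) for variables --- a separate, nontrivial lemma, exactly analogous to the well-known fact that in the $\pi$-calculus strong bisimilarity is not automatically preserved by input-binding contexts. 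It is not a formality here either: a process containing a free select variable may have no transitions at all (the ask axiom requires $C \sqsubseteq D$ on actual triples), so it is bisimilar to $\filter{\zero}$, yet after instantiation of the variable the two are separated. Either you restrict to closed processes and build substitution closure into $\mathcal{R}$ (e.g.\ take pairs $(\mathcal{C}(P\sigma),\mathcal{C}(Q\sigma))$), or you prove the substitution lemma first; your write-up does neither.

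Second, iteration. A hole can occur under $\exponential$, e.g.\ $\exponential(U \then [\cdot])$, and the contraction rule derives a transition of $\exponential U$ from a transition of $\exponential U \tensor \exponential U$, which contains \emph{two} copies of the hole. Your single-hole relation $\mathcal{R}$ is not reapplicable to the premise, and your induction ``on the depth of $\mathcal{C}$'' also fails there, since the premise process is larger than the conclusion. The standard repairs are to induct on the transition derivation rather than on the context, and to handle duplication via multi-hole contexts together with transitivity of $\sim$ (relating $\mathcal{C}'[P,P]$ to $\mathcal{C}'[Q,P]$ to $\mathcal{C}'[Q,Q]$), or to work with a congruence-closure candidate relation. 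Your treatment of parallel composition and of blank-node scopes (par context, parallel outputs, close, open, with Lemma~\ref{lemma:extrusion} and alpha-conversion) is the right shape, but as it stands the select and iteration cases are where the proof would actually break, and the paper itself defers the full argument to the first author's thesis, so these are precisely the cases you cannot wave through.
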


An alternative notion of equivalence is de{f}{i}ned using the reduction system. Contextual equivalence is used in related work to justify notions of bisimulation on the $\pi$-calculus and ambient calculus~\cite{Rathke2005,Merro2002}.
\begin{definition}[Contextual equivalence]
\label{definition:contextual}
Contextual equivalence, written $\simeq$, is the greatest symmetric, reduction closed, context closed relation. A relation $\mathrel{\mathcal{R}}$ is reduction closed iff $P \mathrel{\mathcal{R}} Q$ and $P \Trans P'$ then there exists some $Q'$ such that $Q \Trans Q'$ and $P' \mathrel{\mathcal{R}} Q'$. A relation $\mathrel{\mathcal{R}}$ is context closed iff $P \mathrel{\mathcal{R}} Q$ yields that $\mathcal{C} P \mathrel{\mathcal{R}} \mathcal{C} Q$, for all contexts $\mathcal{C}$.
\end{definition}

Bisimulation is sound with respect to contextual equivalence. Soundness is essential to justify the chosen notion of bisimulation. 

\begin{theorem}[Bisimulation is a contextual equivalence]
\label{theorem:bisimulation-congruence}
If $P \sim Q$ then $P \simeq Q$.
\end{theorem}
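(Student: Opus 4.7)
The plan is to show that $\sim$ itself is a symmetric, reduction closed, context closed relation, and then invoke maximality of $\simeq$ (the greatest such relation) to conclude $\sim \mathop{\subseteq} \simeq$, which is exactly the soundness claim. So the task reduces to verifying the three closure conditions of Definition~\ref{definition:contextual} for bisimilarity.

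Symmetry is immediate from Definition~\ref{definition:bisimulation}, since $\sim$ is defined to be the greatest \emph{symmetric} relation with the bisimulation transfer property. Context closure is already in hand: it is precisely the statement of Lemma~\ref{lemma:context}, which says that $P \sim Q$ implies $\mathcal{C} P \sim \mathcal{C} Q$ for every context $\mathcal{C}$.

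The only remaining obligation is reduction closure. Here the bridge is Lemma~\ref{lemma:labels}, which equates unit-labelled transitions and reductions: $P \lts{\unobservable} Q$ iff $P \Trans Q$. So suppose $P \sim Q$ and $P \Trans P'$. By Lemma~\ref{lemma:labels} we have $P \lts{\unobservable} P'$, and the bisimulation transfer property of $\sim$ at the label $\unobservable$ supplies a $Q'$ with $Q \lts{\unobservable} Q'$ and $P' \sim Q'$. Applying Lemma~\ref{lemma:labels} once more turns $Q \lts{\unobservable} Q'$ back into $Q \Trans Q'$, establishing reduction closure.

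Having checked the three properties, $\sim$ is one of the relations over which the greatest-fixed-point defining $\simeq$ is taken, hence $\sim \mathop{\subseteq} \simeq$, i.e., $P \sim Q$ implies $P \simeq Q$. The main conceptual load of this theorem lives in the earlier lemmas: Lemma~\ref{lemma:labels} does the work of aligning the two operational views so that reduction closure follows from bisimulation transfer at the unit label, and Lemma~\ref{lemma:context} supplies context closure; the present proof is then little more than an assembly step, and I do not foresee any genuine obstacle beyond being careful that the bisimulation transfer is invoked at exactly the label $\unobservable$ required to match the reduction.
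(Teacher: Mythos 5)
Your proposal is correct and matches the paper's own argument: the paper's proof likewise obtains reduction closure from Lemma~\ref{lemma:labels} and context closure from Lemma~\ref{lemma:context}, concluding by the maximality of $\simeq$. Your write-up simply spells out the unit-label transfer step that the paper leaves implicit.
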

\begin{proof}
Reduction closure follows from Lemma~\ref{lemma:labels} and context closure follows from Lemma~\ref{lemma:context}.
\end{proof}

Soundness of bisimulation ensures that algebraic properties proven using bisimulation also hold for contextual equivalence. Bisimulation simpli{f}{i}es proofs in the following section. Note that completeness (contextual equivalence is a bisimulation) is not required for this work. Completeness can only be achieved in an extended version of the calculus.

\subsection{Algebraic properties of queries}
\label{section:algproperties}

Using bisimulation as an equivalence, key properties of queries are established. This section amounts to a soundness proof of the algebraic properties established. Thus if any two process are equivalent according to the algebraic properties then they are bisimilar; and furthermore, by Theorem~\ref{theorem:bisimulation-congruence}, they are contextually equivalent.

For the labelled transition system, structural congruence is not assumed, hence veri{f}{i}ed here. The proof for the distributivity of blank node quanti{f}{i}ers over par requires extensive case analysis. The case of associativity of par follows from distributivity of blank node quanti{f}{i}ers. Proofs are similar to the analogous bisimulations in the $\pi$-calculus~\cite{Milner1992}.
\begin{proposition}
\label{proposition:process-algebra}
The structural congruence (Fig.\ref{figure:structure}) is a bisimulation. So, $(P, \cpar, \cend)$ forms a commutative monoid. Blank node quanti{f}{i}ers annihilate with $\bot$, commute, and distribute over $\cpar$.
\end{proposition}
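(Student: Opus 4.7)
The plan is to exhibit $\equiv$ itself as a bisimulation and appeal to the maximality in Definition~\ref{definition:bisimulation} to conclude $\equiv \subseteq \sim$. Since $\sim$ is a congruence (Lemma~\ref{lemma:context}) and an equivalence, it suffices to check the bisimulation clause for each generating axiom of Fig.~\ref{figure:structure} in isolation, with the lift to congruence closure being automatic. For each axiom $P \equiv Q$, the verification amounts to showing that every labelled transition $P \lts{l} P'$ is matched by some $Q \lts{l} Q'$ with $P' \equiv Q'$, and symmetrically; I would establish this by induction on the derivation of the transition, with a case analysis on its final rule.

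The axioms that dispatch directly are identity, commutativity of $\cpar$, annihilation of scope with $\cend$, and commutativity of nested scopes. For $P \cpar \cend \equiv P$, inspection of Figs.~\ref{figure:lts-1}--\ref{figure:lts-2} shows that $\cend$ has no transitions, so every transition of the left-hand side must arise from par-context acting on $P$ and is matched by the same transition of $P$. Commutativity of $\cpar$ is immediate from the symmetric versions of par-context, parallel-outputs, and close, combined with commutativity of $\comb$ on labels. The axiom $\cscope{a}\cend \equiv \cend$ is vacuous because neither side has any transitions, and $\cscope{a}\cscope{b}P \equiv \cscope{b}\cscope{a}P$ follows by checking that open and blank-node-context applied to two independent quantifiers produce transitions which differ only in the order in which $a$ and $b$ are added to the extruded-name set $\alpha$, itself a monoid up to $\comb$-commutativity.

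The principal obstacle is distributivity $\cscope{a}P \cpar Q \equiv \cscope{a}(P \cpar Q)$ under the freshness side condition on $a$, whose proof requires an extensive case analysis on how $a$ participates in a given transition. I would split into: (i) transitions in which $a$ plays no role on either side, handled by par-context possibly under blank-node-context; (ii) transitions in which $P$ emits an output extruding $a$ without interacting with $Q$, where on the left the open-scope rule fires below the par while on the right it fires above it; and (iii) close transitions in which an output of $P$ carries $a \in \alpha$ and is closed against a matching input of $Q$. Case (iii) is the delicate one: the freshness hypothesis $a \notin \fn{Q}$ together with the various side conditions on $\alpha$ in the rules of Fig.~\ref{figure:lts-2} must be simultaneously discharged so that a close derivation on one side can be reconstructed as an open-then-close on the other. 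Once distributivity is in hand, the remaining case of associativity of $\cpar$ reduces to it: for any par-context, parallel-outputs or close derivation on $P \cpar (Q \cpar R)$, the scopes of names extruded during a close interaction are migrated across the regrouping by distributivity, mirroring the standard $\pi$-calculus argument~\cite{Milner1992}, and the stated monoid and blank-node equivalences at the level of $\sim$ follow by specialisation.
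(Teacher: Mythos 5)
Your proposal follows essentially the same route as the paper: verify the axioms of Fig.~\ref{figure:structure} against the labelled transitions in the style of the analogous $\pi$-calculus bisimulations~\cite{Milner1992}, with the distributivity of blank node quantifiers over $\cpar$ as the one case needing extensive analysis of how the extruded name interacts with open, par-context and close, and with associativity of $\cpar$ then derived from that distributivity. This matches the paper's (sketched) argument, so the proposal is correct in approach; the extra detail you give on the case split is consistent with what the full proof in the cited thesis is said to carry out.
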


Bisimulation reveals some canonical algebraic properties of queries. Firstly, queries form an idempotent semiring. Semirings are ubiquitous in computer science. A notable feature of semirings is that the ideals of a semiring form a semiring.
\begin{proposition}
\label{proposition:semiring}
$(U, \tensor, \oplus, \cunit, \zero)$ is a commutative idempotent semiring. That is, $(U, \tensor, \cunit)$ is a commutative monoid, $(U, \oplus, \zero)$ is idempotent commutative monoid. $\tensor$ distributes over $\oplus$ and $\zero$ annihilates with $\tensor$.
\end{proposition}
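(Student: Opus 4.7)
The plan is to verify each of the listed identities by exhibiting, in each case, a concrete symmetric relation containing the pair of terms in question and showing it is closed under labelled transitions. The verification will rely on two structural observations drawn from Fig.~\ref{figure:lts-1}: first, transitions of $U \oplus V$ are precisely the union of transitions of $U$ and of $V$ via \emph{choose left} and \emph{choose right}; second, transitions of $U \tensor V$ are exactly the \emph{tensor}-combinations $U \lts{E} P$ and $V \lts{F} Q$ yielding label $E \comb F$ and continuation $P \cpar Q$. Because the labels $(E, \comb, \unobservable)$ form a commutative monoid and continuations $(P, \cpar, \cend)$ form a commutative monoid up to $\equiv$ by Proposition~\ref{proposition:process-algebra}, the algebraic identities on $\oplus$ and $\tensor$ reduce to the corresponding identities on these two underlying monoids.

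Concretely, commutativity, associativity, unit with $\zero$ and idempotency of $\oplus$ follow immediately from the union characterisation: both sides of each equation induce the same set of labelled transitions, and continuations coincide on the nose. Commutativity, associativity and unit with $\cunit$ of $\tensor$ are reduced to the commutative monoid laws on labels $\comb$ and on continuations $\cpar$; since $\cunit \lts{\unobservable} \cend$ and $\unobservable \comb E \equiv E$, $\cend \cpar P \equiv P$, the identity $U \tensor \cunit \sim U$ holds. Distributivity $U \tensor (V \oplus W) \sim (U \tensor V) \oplus (U \tensor W)$ is proved by noting that each transition of the left-hand side factors as $U \lts{E} P$ paired with either a $V$- or a $W$-transition (by inspection of the choose rules), and each such factorisation corresponds exactly to a choose-left or choose-right transition of the right-hand side; the converse direction is equally routine. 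Annihilation $U \tensor \zero \sim \zero$ is the easiest case: $\zero$ has no rule in Fig.~\ref{figure:lts-1}, so neither process admits any transition, and the empty relation extension suffices.

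The main obstacle is that the continuations produced by the two sides of an equation are rarely syntactically identical: $(U \tensor V) \tensor W$ produces continuations of the shape $(P \cpar Q) \cpar R$ whereas $U \tensor (V \tensor W)$ produces $P \cpar (Q \cpar R)$, and similarly the label monoid acts only up to $\equiv$. The correct way around this is to take each candidate relation \emph{closed under structural congruence on both components}; this is legitimate because Proposition~\ref{proposition:process-algebra} states that $\equiv$ is itself a bisimulation and $\sim$ is an equivalence, so closing under $\equiv$ preserves the bisimulation property. With that closure in place every diagram chases directly from the two characterisations of transitions above, and no further case analysis (of the kind needed for the blank-node/par distribution law) is required.
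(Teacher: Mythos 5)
Your proposal is correct and follows essentially the same route the paper takes (full details being deferred to the first author's thesis): exhibit concrete bisimulation relations over the labelled transition system, using the inversion of the choose and tensor rules together with the monoid structure of labels and continuations, and handle the syntactic mismatch of residuals by appealing to Proposition~\ref{proposition:process-algebra}, which is proved first precisely because structural congruence is not built into the LTS. Your closure-under-$\equiv$ step is the standard sound way to discharge that mismatch, so no gap remains.
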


Idempotent semirings have a natural preorder, given by $U \leq V$ iff $U \oplus V \sim V$. Hence queries have this natural preorder. An immediate consequence is that choice is a colimit, i.e. least upper bound, of two queries.
\begin{proposition}
\label{proposition:colimitchoice}
Choice is a colimit of its branches. That is, $V \leq W$ and $U \leq W$, if and only if $V \oplus U \leq W$.
\end{proposition}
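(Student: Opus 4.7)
The plan is to reduce the claim to routine equational reasoning inside the idempotent commutative monoid $(U, \oplus, \zero)$ established in Proposition~\ref{proposition:semiring}, using the fact that bisimulation is a congruence (Lemma~\ref{lemma:context}) so that bisimilar queries may be substituted under $\oplus$.

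For the forward implication, I assume $V \leq W$ and $U \leq W$, i.e.\ $V \oplus W \sim W$ and $U \oplus W \sim W$, and compute
\[
(V \oplus U) \oplus W \;\sim\; V \oplus (U \oplus W) \;\sim\; V \oplus W \;\sim\; W,
\]
where the first step is associativity of $\oplus$, the second step rewrites $U \oplus W$ to $W$ under the context $V \oplus [\cdot]$ using Lemma~\ref{lemma:context}, and the third step is the remaining hypothesis. This yields $V \oplus U \leq W$ directly from the definition of $\leq$.

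For the backward implication, I assume $(V \oplus U) \oplus W \sim W$ and want to recover both $V \oplus W \sim W$ and $U \oplus W \sim W$. The key trick is to exploit idempotency: since $V \oplus V \sim V$ (and symmetrically for $U$), I can write
\[
V \oplus W \;\sim\; V \oplus \bigl((V \oplus U) \oplus W\bigr) \;\sim\; (V \oplus V) \oplus U \oplus W \;\sim\; V \oplus U \oplus W \;\sim\; W,
\]
where the first step substitutes $W$ by the bisimilar $(V \oplus U) \oplus W$ under the context $V \oplus [\cdot]$, the second is associativity and commutativity, the third is idempotency, and the last is the hypothesis (again applied under a congruence). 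The case $U \oplus W \sim W$ is symmetric, using commutativity of $\oplus$ to swap $U$ and $V$ in the hypothesis.

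The only genuine obstacle is bookkeeping: every rewrite of a subterm inside a bisimulation must be justified by Lemma~\ref{lemma:context}, and the monoid laws used (associativity, commutativity, idempotency) must be drawn from Proposition~\ref{proposition:semiring}. Once those citations are in place the argument is a handful of lines of equational reasoning, with no case analysis on transitions and no new bisimulation relation to exhibit.
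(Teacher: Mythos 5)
Your proposal is correct and matches the paper's intent: the paper presents this proposition as an immediate consequence of the idempotent commutative monoid structure of $\oplus$ from Proposition~\ref{proposition:semiring} together with the definition of the natural preorder, which is precisely the equational argument (associativity, commutativity, idempotency, plus congruence of $\sim$ via Lemma~\ref{lemma:context}) that you spell out. No gaps; the bookkeeping you flag is exactly what the paper leaves implicit.
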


The preorder over queries can be used to optimise queries. If a query offers a choice between a query and a weaker query, with respect to the preorder, the stronger branch may be eliminated. For instance, in related work~\cite{Arenas2009}, is is claimed that $U \mathrel{\texttt{OPTIONAL}} (V \mathrel{\texttt{OPTIONAL}} W)$ is not the same as $(U \mathrel{\texttt{OPTIONAL}} V) \mathrel{\texttt{OPTIONAL}} W$. Under the interpretation of \texttt{OPTIONAL} in the calculus it holds that $U \tensor ((V \tensor (W \oplus \cunit)) \oplus \cunit) \leq U \tensor \left(\left(V \oplus \cunit\right) \tensor \left(W \oplus \cunit\right)\right)$, by distributivity, commutativity and idempotency. So the {f}{i}rst is a stronger query.

A single rule is su{ff}{i}cient to capture the algebra of the select quanti{f}{i}er. From this algebra common equalities can be derived. The derived rules are suitable for the optimisation technique of flattening nested selects used in relational algebra~\cite{Cyganiak2005}. The proof of commutativity of quanti{f}{i}ers requires capture avoiding substitution to be assumed. The presence of the tensor in the rule is required to prove that $\cselect{a} U \tensor V \leq \cselect{a} (U \tensor V)$, when $a \not \in \fn{V}$.

\begin{proposition}
\label{proposition:colimitselect}
Selects are colimits of substitutions. So, $U\sub{a}{b} \tensor V \leq W$ for all $b$, if and only if $\cselect{a}U \tensor V \leq W$. Immediate consequences are that, select commutes, distributes over choice, is annihilated by true and distributes over tensor. Furthermore, alpha conversion of bound variables is veri{f}{i}ed.
\end{proposition}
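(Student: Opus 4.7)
The plan is to prove the biconditional by exhibiting two bisimulations, and then to reduce each listed consequence to it. The key structural observation is that $\cselect{a} U \tensor V$ is not of the form $\ins{C}$, $\cscope{c} P$, or $P \cpar Q$, so no rule of Fig.~\ref{figure:lts-2} applies; every labelled transition must therefore be an instance of the tensor rule of Fig.~\ref{figure:lts-1}, with a left premise $\cselect{a} U \lts{E} P$ that in turn can only be derived by the select-name rule from some $U\sub{a}{b} \lts{E} P$. Thus the transitions of $\cselect{a} U \tensor V$ coincide with those of $U\sub{a}{b} \tensor V$ as $b$ ranges over names, with identical labels and continuations.

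For the only-if direction I would take the candidate $\mathcal{R} = \{((U\sub{a}{b} \tensor V) \oplus W,\, W)\} \cup \sim$, symmetrised, and verify it is a bisimulation. Any transition of $W$ is matched on the left by right-choice. A left-disjunct transition arises from $U\sub{a}{b} \tensor V$ and lifts through the select rule to a labelled transition of $\cselect{a} U \tensor V$; the hypothesis $(\cselect{a} U \tensor V) \oplus W \sim W$ then supplies a bisimilar matching transition of $W$. The if direction uses the symmetric candidate with $\cselect{a} U \tensor V$ in place of $U\sub{a}{b} \tensor V$; here the critical inversion above factors every left-disjunct transition through some $U\sub{a}{b} \tensor V$, which the hypothesis $U\sub{a}{b} \tensor V \leq W$ matches in $W$.

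The corollaries then fall out by instantiation. Taking $V = \cunit$ recovers the one-sided form $\cselect{a} U \leq W$ iff $U\sub{a}{b} \leq W$ for all $b$, and setting $W = \cselect{a} U$ yields $U\sub{a}{b} \leq \cselect{a} U$. Alpha conversion $\cselect{a} U \sim \cselect{b} U\sub{a}{b}$ with $b$ fresh reduces to the identity $U\sub{a}{c} = U\sub{a}{b}\sub{b}{c}$; commutativity of nested selects follows by two applications of the biconditional modulo capture-avoiding substitution. Annihilation $\cselect{a} \cunit \sim \cunit$ is immediate since $\cunit\sub{a}{b} = \cunit$. Distributivity over $\oplus$ uses the biconditional together with monotonicity of $\oplus$, inherited from Lemma~\ref{lemma:context}. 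For $\cselect{a}(U \tensor V) \sim \cselect{a} U \tensor V$ with $a \notin \fn{V}$, one direction reduces via the biconditional to $U\sub{a}{b} \tensor V = (U \tensor V)\sub{a}{b} \leq \cselect{a}(U \tensor V)$, which is the one-sided form; the other direction reduces to $U\sub{a}{b} \tensor V \leq \cselect{a} U \tensor V$, which is trivial by instantiating the biconditional with $W = \cselect{a} U \tensor V$. This is precisely where the presence of $V$ in the premise of the iff is indispensable.

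The main obstacle is getting the inversion of transitions for $\cselect{a} U \tensor V$ watertight: one must verify that none of the process-level rules of Fig.~\ref{figure:lts-2} can derive a transition with $\cselect{a} U \tensor V$ as source, so that the only available schema is the query tensor rule followed by select-name, delivering the factorisation through some $U\sub{a}{b}$. Once this inversion is secured, the two bisimulations and the deductions of the corollaries proceed uniformly.
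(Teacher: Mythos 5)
Your proposal is correct and takes essentially the approach the paper intends (the paper defers the full proof to the first author's thesis): verify the single colimit rule by bisimulation candidates of the form $(X \oplus W,\, W)$, using inversion of the tensor and select-name transition rules to factor every transition of $\cselect{a}U \tensor V$ through some $U\sub{a}{b} \tensor V$, and then derive the stated consequences algebraically from that rule together with the semiring and congruence results. In particular your observations that the $V$ in the rule is what makes $\cselect{a}U \tensor V \leq \cselect{a}(U \tensor V)$ (for $a \not\in \fn{V}$) provable, and that commutativity of selects rests on capture-avoiding substitution, are exactly the points the paper itself singles out.
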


The following rules of regular algebra hold. The {f}{i}rst of the rules is su{ff}{i}cient to demonstrate that $\exponential V \tensor U$ is a {f}{i}xed point of the (monotone) map $W \mapsto U \oplus (V \tensor W)$. The second rule demonstrates that $\exponential V \tensor U$ is the least such {f}{i}xed point. Historically, Redko demonstrated that no {f}{i}nite collection of equations could axiomatise iteration~\cite{Redko1964}. The formulation below, was proven to be complete by Kozen~\cite{Kozen1994}.

\begin{proposition}
\label{proposition:exponential}
An iterated query expands as follows $\exponential U \sim \cunit \oplus (U \tensor \exponential U)$.
Furthermore, if $U \oplus (V \tensor W) \leq W$ then $\exponential V \tensor U \leq W$. 
\end{proposition}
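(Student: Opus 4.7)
The plan is to extract a normal form for the transitions of $\exponential U$ and to use it for both parts. Specifically, I would first prove the following lemma by induction on the derivation: every transition $\exponential U \lts{E} P$ arises from some $n \geq 0$ together with transitions $U \lts{E_i} P_i$ for $i = 1, \dots, n$ such that $E = E_1 \comb \cdots \comb E_n$ and $P \equiv P_1 \cpar \cdots \cpar P_n$, with the convention that the empty products are $\unobservable$ and $\cend$ respectively. Weakening supplies $n = 0$, dereliction supplies $n = 1$, and each use of contraction concatenates two such lists.

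For the unfolding $\exponential U \sim \cunit \oplus (U \tensor \exponential U)$, I would exhibit a candidate bisimulation containing this pair together with the identity, closed under $\equiv$. A transition of $\exponential U$ with $n = 0$ is $\lts{\unobservable} \cend$ and is matched by the left branch of the choice; a transition with $n \geq 1$ factors as $U \lts{E_1} P_1$ synchronised via tensor with an $\exponential U$-transition of length $n - 1$, which is exactly what the right branch produces. Conversely, the left branch is matched by weakening, and the right branch by applying dereliction, tensor and contraction to $\exponential U$. At each step the two sides agree up to structural congruence, so Proposition~\ref{proposition:process-algebra} closes the bisimulation.

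For the induction rule, assume $U \oplus (V \tensor W) \leq W$; Proposition~\ref{proposition:colimitchoice} then yields both $U \leq W$ and $V \tensor W \leq W$. To show $\exponential V \tensor U \leq W$ it suffices to simulate every transition of $\exponential V \tensor U$ by a transition of $W$ up to $\sim$. Applying the normal form to the $\exponential V$-factor, such a transition has the shape $\lts{E_1 \comb \cdots \comb E_n \comb F} P_1 \cpar \cdots \cpar P_n \cpar Q$ coming from $V \lts{E_i} P_i$ and $U \lts{F} Q$, and I induct on $n$. The base case $n = 0$ reduces to a transition of $U$, absorbed by $U \leq W$. For $n + 1$, re-read the transition as $V \lts{E_{n+1}} P_{n+1}$ tensored with an $\exponential V \tensor U$-transition of length $n$; the induction hypothesis simulates the tail by some $W \lts{l} W'$; synchronising with $V \lts{E_{n+1}} P_{n+1}$ via tensor produces a transition of $V \tensor W$; and $V \tensor W \leq W$ then supplies the required simulating transition of $W$.

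The main obstacle will be the bookkeeping around structural congruence, congruence of $\sim$, and monotonicity of $\tensor$ with respect to $\leq$. In particular, the inductive step silently relies on the fact that $U \leq W$ implies $V \tensor U \leq V \tensor W$, which needs a short derivation from distributivity (Proposition~\ref{proposition:semiring}) together with congruence of $\sim$ (Lemma~\ref{lemma:context}); and aligning the resulting process $P_{n+1} \cpar W'$ with $P_1 \cpar \cdots \cpar P_{n+1} \cpar Q$ demands associativity and commutativity of $\cpar$ via Proposition~\ref{proposition:process-algebra}.
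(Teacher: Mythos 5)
Your proposal is correct and follows the paper's own method: the algebraic laws are verified by exhibiting (bi)simulations over the labelled transition system (the paper defers full details to the first author's thesis), and your normal-form lemma for transitions of $\exponential U$ — weakening, dereliction and contraction yielding a finite parallel product of $U$-transitions — together with the induction on the number of copies, plus the observation that $X \leq W$ amounts to one-step simulation of $X$ by $W$ up to $\sim$, is exactly the natural instantiation of that approach for Proposition~\ref{proposition:exponential}. Note only that your power-decomposition argument is in essence the same fact that underlies Proposition~\ref{proposition:colimititeration}, which the paper remarks is strictly more general.
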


A classic consequence of the above is that queries without select can always be denested to a single iteration~\cite{Kozen1997}. However, select breaks denesting since iteration and select do not commute. For instance the following query requires two iterations. The result is that for each of the {f}{i}rst continuation triggered, zero or more instances of the second continuation are triggered. This query can be expressed using sub-queries in the current SPARQL Query working draft~\cite{Seaborne2010}.
\[
\exponential
\cselect{a}
\cselect{n}
\left(
 \left(
  \ask{
   \triple{a}{\textit{name}}{n}
  }
  \then
  P
 \right)
 \tensor
 \exponential
 \cselect{e}
 \left(
  \ask{
   \triple{a}{\textit{email}}{e}
  }
  \then Q
 \right)
\right)
\]

Iteration can be expressed as a colimit of repeated queries. This is a strictly more general property than Proposition~\ref{proposition:exponential}~\cite{Kozen1990}. Since all constructs are colimits which distribute over tensor, the ideals generated by queries form a (commutative) quantale, as exploited by Montanari, Hoare and others~\cite{Montanari1997,Hoare2009}. Quantales are related to spectral theory, which is related to information retrieval techniques used by search engines. {Clarif}{ication} of this connection is future work.

\begin{proposition}
\label{proposition:colimititeration}
Iteration is a colimit of powers of queries. So, $U^n \tensor V \leq W$ for all $n$, if and only if $ \exponential U \tensor V \leq W$.
\end{proposition}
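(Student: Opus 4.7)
The plan is to prove both directions, with the forward direction following algebraically from the earlier propositions and the backward direction requiring an analysis of the labelled transition system for iteration.

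For the forward direction, I would first show $U^n \leq \exponential U$ for every $n$ by induction on $n$. The base case $\cunit \leq \exponential U$ and the inductive step $U^{n+1} = U \tensor U^n \leq U \tensor \exponential U \leq \exponential U$ both rely on the expansion $\exponential U \sim \cunit \oplus (U \tensor \exponential U)$ of Proposition~\ref{proposition:exponential}, together with monotonicity of $\tensor$ in the natural preorder (a consequence of distributivity of tensor over choice in Proposition~\ref{proposition:semiring} and congruence of bisimulation, Lemma~\ref{lemma:context}). Monotonicity and transitivity then give $U^n \tensor V \leq \exponential U \tensor V \leq W$.

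For the backward direction, I would work directly with the labelled transition system, using the standard characterization of the preorder: $A \leq B$ iff every transition $A \lts{l} A'$ is matched by some $B \lts{l} B'$ with $A' \sim B'$, which follows from the definition of $\leq$ and the rules for $\oplus$ in Fig.~\ref{figure:lts-1}. The key structural lemma to establish is: every labelled transition $\exponential U \lts{E} P$ coincides, up to structural congruence of labels and processes, with some transition $U^n \lts{E} P$ for a suitable natural number $n \geq 0$. This is shown by induction on the derivation: weakening yields the $n=0$ case (using that $U^0 = \cunit$ has the trivial transition $\cunit \lts{\cunit} \cend$), dereliction yields the $n=1$ case, and contraction from $\exponential U \tensor \exponential U$ decomposes through the tensor rule into sub-derivations of counts $n_1$ and $n_2$ which recombine via the tensor rule to produce a $U^{n_1+n_2}$ witness.

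With the structural lemma in hand, any transition $\exponential U \tensor V \lts{E} P$ decomposes via the tensor rule into transitions of $\exponential U$ and $V$, and replacing the $\exponential U$ component by its $U^n$ witness yields $U^n \tensor V \lts{E} P$; by the hypothesis this is matched by some $W \lts{E} Q$ with $P \sim Q$, establishing $\exponential U \tensor V \leq W$. The main obstacle is the structural lemma itself: derivations of $\exponential U \lts{E} P$ may interleave contraction, dereliction, and weakening in arbitrary patterns, and one must verify that the counts add up correctly, relying on associativity and commutativity of $\tensor$ on labels together with the fact that $U^{n_1} \tensor U^{n_2}$ is structurally congruent to $U^{n_1+n_2}$ up to the monoid laws for $\tensor$.
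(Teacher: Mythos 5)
Your proposal is correct and takes essentially the approach the paper intends: all of Section~\ref{section:algproperties} is verified by bisimulation over the labelled transition system (the paper defers the detailed proofs to the first author's thesis), with the easy direction obtained algebraically from Propositions~\ref{proposition:semiring} and~\ref{proposition:exponential} and the hard direction resting on exactly your structural lemma, that every transition of $\exponential U$ derived by weakening, dereliction and contraction coincides, up to label reordering and structural congruence, with a transition of some finite power $U^n$. The minor imprecisions you flag (e.g.\ $U^1 = U \tensor \cunit$ rather than $U$, and matching labels only up to the commutative monoid laws) are handled exactly as you suggest, so no gap remains.
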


Kozen demonstrates that Boolean algebras can be embedded in Kleene algebras~\cite{Kozen1997}. The `tests' of Kozen correspond to `constraints' in SPARQL. Bisimulation veri{f}{i}es that the Boolean algebra of constraints embeds in the Kleene algebra, in the same manner, with similar consequences.
\begin{proposition}
\label{proposition:Boolean}
The Boolean algebra of constraints embeds in queries. Using standard classical implication, $\phi \Rightarrow \psi$ if and only if $\phi \leq \psi$. Or is choice, and is tensor, exists is select and an iterated constraint is always true.
\end{proposition}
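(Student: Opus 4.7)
The plan is to lift the embedding $\phi \mapsto \filter{\phi}$ from constraints to queries, and to observe that for any constraint $\phi$ the query $\filter{\phi}$ admits exactly one possible labelled transition, namely $\filter{\phi} \lts{\unobservable} \cend$, and this transition exists iff $\vDash \phi$. Consequently, the entire observable behaviour of $\filter{\phi}$ is determined by the satisfiability of $\phi$. Each clause of the proposition then reduces to showing that both sides admit a $\unobservable$-labelled transition under the same satisfiability conditions, with continuations bisimilar to $\cend$.

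I would handle the preorder first: $\phi \Rightarrow \psi$ iff $\filter{\phi} \oplus \filter{\psi} \sim \filter{\psi}$. The left-hand side has a $\unobservable$-transition to $\cend$ precisely when $\vDash \phi$ or $\vDash \psi$, and the right-hand side when $\vDash \psi$. Since $\vDash$ defines a Boolean algebra, these coincide iff satisfaction of $\phi$ entails satisfaction of $\psi$, which is classical implication.

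For the four equalities I would exhibit a candidate bisimulation containing the indicated pair together with $(\cend, \cend)$, and verify the transfer conditions by a straightforward case analysis on the LTS rules of Fig.~\ref{figure:lts-1}. For $\phi \vee \psi \sim \phi \oplus \psi$ the only transitions have label $\unobservable$ and target $\cend$, and they fire on either side exactly when $\vDash \phi$ or $\vDash \psi$, matching via the two choose rules. For $\phi \wedge \psi \sim \phi \tensor \psi$ the filter rule on the left requires $\vDash \phi \wedge \psi$, while the tensor rule on the right forces both sub-filters to fire, composing labels $\unobservable \comb \unobservable = \unobservable$ and continuations $\cend \cpar \cend \equiv \cend$; the two conditions agree by the Boolean algebra assumption. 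For $\exists a. \phi \sim \cselect{a}\phi$, the select rule introduces an arbitrary witness $b$, reducing to $\filter{\phi\sub{a}{b}}$, so the two sides match because $\vDash \exists a. \phi$ iff $\vDash \phi\sub{a}{b}$ for some $b$; capture-avoiding substitution is assumed here, as in Proposition~\ref{proposition:colimitselect}.

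The hardest clause is $\cunit \sim \exponential \phi$, because iteration admits arbitrarily shaped derivations combining weakening, dereliction, and contraction. The plan is to show by induction on the derivation of any transition $\exponential \filter{\phi} \lts{E} R$ that $E = \unobservable$ and $R \equiv \cend$: weakening gives this directly; dereliction reduces to the single-filter case, which is only available when $\vDash \phi$ but still produces $\unobservable$ and $\cend$; contraction composes two such transitions via tensor, combining unit labels and continuations congruent to $\cend$. The converse direction is immediate, because weakening always supplies $\exponential \filter{\phi} \lts{\unobservable} \cend$ matching the sole transition of $\cunit$, independently of whether $\phi$ is satisfiable---this is precisely why iteration trivialises the Boolean content of $\phi$. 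The main obstacle is controlling the induction over arbitrary iteration derivations, but since every leaf contributes only a unit label and a continuation structurally congruent to $\cend$, the combinatorics stay confined to repeated applications of the monoid laws on $\comb$ and $\cpar$.
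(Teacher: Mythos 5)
Your proposal is correct and follows what is evidently the intended argument: since $\filter{\phi}$ has at most the single transition $\filter{\phi}\lts{\unobservable}\cend$, available exactly when $\vDash\phi$, each clause reduces to matching unit-labelled transitions via the filter, choose, tensor, select and iteration rules, with the iteration clause handled by induction over weakening/dereliction/contraction derivations (the paper defers its full proof to the first author's thesis, but this is the same bisimulation-based route). The only implicit assumptions you should flag explicitly are that satisfaction respects the Boolean connectives and that $\vDash\exists a.\phi$ is read substitutionally over the names that select may choose, which is exactly what the paper's ``satisfaction defines a Boolean algebra'' hypothesis is meant to supply.
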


As with classical implication, the preorder over triples can be embedded in the partial order over processes. However, since alias assumptions are only a preorder, if $C \sim D$ then it holds that $C \sqsubseteq D$ and $D \sqsubseteq C$, which is weaker than equality. Maintaining distinction of names is important for applications where $\beta$ is not {f}{i}xed over time.

\begin{proposition}
\label{proposition:alias}
$C \sqsubseteq D$ if and only if $C \leq D$.
\end{proposition}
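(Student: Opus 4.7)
The plan is to interpret the triples $C$ and $D$ on both sides of $\leq$ as the ask queries $\ask{C}$ and $\ask{D}$, and to exploit the characterisation of the semiring preorder given by $U \leq V$ iff $U \oplus V \sim V$ (from the remark following Proposition~\ref{proposition:semiring}). The central observation the whole argument will lean on is that the only labelled transitions available to $\ask{D}$ are the input-triple transitions $\ask{D} \inlts{E} \cend$ with side condition $E \sqsubseteq D$ (Fig.~\ref{figure:lts-1}); both directions of the biconditional then reduce to comparing the side conditions of these input rules across the two queries.

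For the forward implication, assume $C \sqsubseteq D$. I would exhibit the small relation $\mathcal{R} = \{(\ask{C} \oplus \ask{D}, \ask{D}), (\cend, \cend)\}$, closed under symmetry, and verify it is a bisimulation. Any transition $\ask{D} \inlts{E} \cend$ with $E \sqsubseteq D$ is matched by $\ask{C} \oplus \ask{D} \inlts{E} \cend$ via the choose right rule. Conversely, a transition of $\ask{C} \oplus \ask{D}$ either comes from the right branch (matched trivially by $\ask{D}$) or from the left, where $\ask{C} \inlts{E} \cend$ requires $E \sqsubseteq C$; composing with the hypothesis $C \sqsubseteq D$ by transitivity of $\sqsubseteq$ yields $E \sqsubseteq D$, so $\ask{D} \inlts{E} \cend$ matches. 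Hence $\ask{C} \oplus \ask{D} \sim \ask{D}$, i.e.\ $C \leq D$.

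For the backward implication, assume $C \leq D$, equivalently $\ask{C} \oplus \ask{D} \sim \ask{D}$. By reflexivity of $\sqsubseteq$ the input-triple rule gives $\ask{C} \inlts{C} \cend$, and choose left lifts this to $\ask{C} \oplus \ask{D} \inlts{C} \cend$. Bisimulation then forces a matching transition $\ask{D} \inlts{C} Q$ for some $Q$ with $\cend \sim Q$. Inspecting Fig.~\ref{figure:lts-1}, the only transitions available to the leaf query $\ask{D}$ are of the shape $\ask{D} \inlts{E} \cend$ with $E \sqsubseteq D$, so the label $C$ must satisfy $C \sqsubseteq D$, as required.

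The argument is essentially direct inspection of the input-triple rule and there is no real obstacle. The one point requiring care is confirming that $\ask{D}$ admits no other labelled transitions beyond those of the input-triple rule, so that the matching label produced in the backward direction genuinely witnesses the alias relation $C \sqsubseteq D$ rather than merely some derivable consequence of it; this is immediate from examining Fig.~\ref{figure:lts-1}, since $\ask{D}$ is a leaf of the query syntax with no other applicable rules.
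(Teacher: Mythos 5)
Your proof is correct and is the natural argument the paper intends (the paper itself defers the details to the first author's thesis): unfolding $C \leq D$ as $\ask{C} \oplus \ask{D} \sim \ask{D}$, checking the two-pair bisimulation candidate using transitivity of $\sqsubseteq$ for the forward direction, and using reflexivity plus inversion of the input-triple rule for the converse. You also correctly flag and discharge the one delicate point, namely that $\ask{D}$ admits no labelled transitions other than those of the input-triple rule, so the matching label genuinely forces $C \sqsubseteq D$.
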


The multiplicatives then, par and times and the units are related in the following manner. Combined with the previous rules the properties of then are established. The second rule shows that `then' can be replaced by the unit delay (as in~\cite{Abramsky1995}).
\begin{proposition}
\label{proposition:continuation}
An empty continuation can be removed, a continuation can be decomposed into the guard and a unit delayed process, and two continuations can be combined in a single par continuation, as follows.
\begin{gather*}
\cunit \then \cend \sim \cunit
\qquad
U \otimes (\cunit \then P) \sim U \then P
\qquad
\left(U \then P \right) \then Q \sim U \then \left(P \cpar Q\right) 
\end{gather*}
\end{proposition}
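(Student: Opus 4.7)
The plan is to treat each of the three bisimilarities independently by exhibiting a candidate bisimulation consisting of the stated pair together with structural congruence $\equiv$. Proposition~\ref{proposition:process-algebra} already guarantees that $\equiv$ is a bisimulation, so any derivatives that agree up to $\equiv$ are automatically bisimilar. Since the LTS for queries (Fig.~\ref{figure:lts-1}) is syntax-directed, for each equality the enumeration of transitions reduces to identifying which rules have a conclusion matching the outermost operator.

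For $\cunit \then \cend \sim \cunit$, read $\cunit$ as the query $\filter{\cunit}$; the only applicable rule is the filter axiom, giving $\cunit \lts{\unobservable} \cend$. The only rule applicable to $\cunit \then \cend$ is trigger-guard, which consumes this very transition and yields $\cunit \then \cend \lts{\unobservable} \cend \cpar \cend$. Since $\cend \cpar \cend \equiv \cend$, the derivatives are related. For $U \tensor (\cunit \then P) \sim U \then P$, every transition of the left-hand side must arise from the tensor rule, whose second premise requires a transition of $\cunit \then P$; by the previous analysis this premise is uniquely witnessed by trigger-guard over the filter axiom, giving $\cunit \then P \lts{\unobservable} \cend \cpar P$. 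Hence left-hand transitions are in bijection with transitions $U \lts{E} Q$, yielding $U \tensor (\cunit \then P) \lts{E \comb \unobservable} Q \cpar \cend \cpar P$. The right-hand side exhibits the matching transition $U \then P \lts{E} Q \cpar P$ directly by trigger-guard, and the pair of derivatives agrees modulo the label monoid law $E \comb \unobservable \equiv E$ and the process law $Q \cpar \cend \cpar P \equiv Q \cpar P$.

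For $(U \then P) \then Q \sim U \then (P \cpar Q)$, the only rule whose conclusion matches either outer $\then$ is trigger-guard. On the left, the outer trigger-guard requires a transition of the inner query $U \then P$, which must itself come from an inner trigger-guard over some $U \lts{E} Q'$, so the full family of transitions is exactly $(U \then P) \then Q \lts{E} (Q' \cpar P) \cpar Q$. On the right, the transitions are exactly $U \then (P \cpar Q) \lts{E} Q' \cpar (P \cpar Q)$, and the two derivatives coincide up to associativity of $\cpar$. The main obstacle throughout is not any single inductive step but ensuring exhaustiveness of the case analyses; this reduces to inspecting Fig.~\ref{figure:lts-1} and noting that no process-side rule from Fig.~\ref{figure:lts-2} fires on a query-headed term. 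Once exhaustiveness is granted, all residuals are identified by the monoid laws for labels and for par, and the three bisimulations follow.
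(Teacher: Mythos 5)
Your proposal is correct and follows the expected route: the paper defers its proofs to the first author's thesis, but the intended argument is exactly this direct exhibition of bisimulations, exploiting that the query LTS of Fig.~\ref{figure:lts-1} is syntax-directed (so trigger-guard and the filter axiom are the only applicable rules) and closing the derivatives up to the label monoid laws and structural congruence, which Proposition~\ref{proposition:process-algebra} licenses. Your case analyses are exhaustive and the matching of residuals such as $Q \cpar \cend \cpar P \equiv Q \cpar P$ is exactly what is needed, so the three equalities $\cunit \then \cend \sim \cunit$, $U \tensor (\cunit \then P) \sim U \then P$ and $(U \then P) \then Q \sim U \then (P \cpar Q)$ follow as you argue.
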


The algebra can be applied to optimise queries for distribution. In the example below the {f}{i}rst query is rewritten as the tensor product of two queries.
\[
\begin{array}{l}
\exponential
 \cselect{a}
 \left(
  \left(
   \ask{
    \triple{a}{\textit{knows}}{b_2}
   }
   \then P
  \right)
  \oplus
  \left(
   \ask{
    \triple{a}{\textit{knows}}{b_3}
   }
   \then Q
  \right)
 \right)
%
\sim
\exponential
 \cselect{a}
 \left(
 \cguard{
  \triple{a}{\textit{knows}}{b_2}
 } P
 \right)
\tensor\ 
\exponential
 \cselect{a}
 \left(
 \cguard{
  \triple{a}{\textit{knows}}{b_3}
 } Q
 \right)
\end{array}
\]
The second query above is better for distribution. The tensor product allows two smaller queries to be immediately evaluated in parallel. The tighter scope of the select quanti{f}{i}ers reduces the branching when potential values to select are considered. The distribution of queries across clusters of servers is a major problem for processing Linked Data~\cite{Bizer2009}.

\section{Conclusion}

The calculus introduced provides the {f}{i}rst operational semantics for SPARQL Query -- a W3C recommendation for querying Linked Data. The calculus has a concise logical semantics {def}{ined} by a reduction system. The power of the calculus lies in the synchronisation primitives for queries. The synchronisation primitives are required to match the expressiveness of the core of SPARQL Query. Queries are internalised in a high-level process calculus, where query results determine continuation processes.

An alternative labelled transition system is shown to match the expressive power of the reduction system. Furthermore, the notion of bisimulation in the labelled transition system is sound with respect to equivalence in the reduction system. Bisimulation is used to verify an algebra over queries, which extends existing notions of an algebra for SPARQL Query. An algebra of queries is useful when tackling problems associated with Linked Data, such as distributed query planning.

The operational semantics combines several formalisms, as expected for a real language. The queries form a semiring, which provides a natural partial order. This partial order is used to characterise choice, selects and iteration as colimits. Also, iteration is the least {f}{i}xed point of a monotonic map over queries, hence queries form a Kleene algebra. A preorder over URIs allows small permissible mismatches between content and queries to be resolved, capturing key features of the RDFS standard. Also, a Boolean algebra of constraints is naturally embedded in queries, to provide further control. The calculus demonstrates that key features of SPARQL and related standards for Linked Data can be tightly integrated in one framework.

\bibliographystyle{eptcs}
\bibliography{syndmeta}

\end{document}